\newtheorem{theorem}{Theorem}[section]
\newtheorem{lemma}[theorem]{Lemma}
\theoremstyle{definition}
\newcommand{\be}[1]{\begin{equation}\label{#1}}
\newcommand{\ee}{\end{equation}}
\title[Tolman--Oppenheimer--Volkoff Equation]
      {Tolman--Oppenheimer--Volkoff Equation}
\author[Dorota Bors and Robert Sta\'nczy]{}
\subjclass{Primary: 35Q85, 70K05, 85A05; Sec.: 34E15, 37N05.}
 \keywords{dynamical system, Einstein equation, TOV model, general relativity.}
\begin{document}


\maketitle

\centerline{\scshape Robert Sta\'nczy}
\medskip
{\footnotesize
 \centerline{ Instytut Matematyczny}
 \centerline{Uniwersytet Wroc{\l}awski}
 \centerline{pl. Grunwaldzki 2/4}
 \centerline{50--384 Wroc{\l}aw, Poland}
}

\medskip

\centerline{\scshape Dorota Bors}
\medskip
{\footnotesize
 \centerline{Faculty of Mathematics and Computer Science}
   \centerline{University of Lodz}
   \centerline{Banacha 22}
   \centerline{90-238 \L\'od\'z, Poland}
} 

\bigskip


\bigskip

\begin{abstract}
The existence of solutions to Tolman--Openheimer--Volkoff equation with linear equation of state modeling relativistic cloud of interacting particles is proved for mass parameter below certain threshold. For the intermediate values of mass parameters multiplicity result holds. For the small mass parameter the uniqueness is guaranteed. Moreover, there is a threshold value of the mass, which can not be exceeded. It is achieved by considering the related dynamical system in the rescaled mass--density variables which is governed by the global Lyapunov function with sink. Some preliminary extensions to nonlinear equation of state are discussed with numerical density profiles.
\end{abstract}

\section{Introduction}
While `black hole' model was suggested by Schwarzschild swiftly after publication of Einstein  equations, Einstein himself was dubious that such aggregation can occur in real life expressing it in 1939 in \cite{Ein}, while at the same time Tolman \cite{Tol} and Oppenheimer with Volkoff \cite{OV} considered the alternative solutions to the problem, extending it with Snyder \cite{OS} while Buchdahl \cite{Buc} and Bondi \cite{B2, B1} showed its limits. We consider systems governed by the Einstein--Euler equations equipped with Equation of State, relating the pressure to the density of energy. We are interested in static and spherically symmetric solutions to the problem. The main results hold for the linear Equation of State motivated by the research of Chandrasekhar \cite{Ch} and more recently Chavanis \cite{a,b,e} and Andersson et all, see \cite{An}. In the more physically accepted nonlinear Equation of State like this modeled by relativistic Fermi--Dirac or relativistic Michie--King that could be modeled by toy polytropic models the linear approximation holds only at large values of the energy density or pressure. These distribution functions were motivated by Ruffini et all \cite{Sag} to model dark matter distribution to explain the recent observation of trajectories of the  objects in our Milky Way with stable \cite{BSM} or unstable Hills effect, see Koposov \cite{S5K}. Our approach relies on change of variables in the spirit of Milne and rigorous proof of the existence of the Lyapunov function thus extending the approach to relativistic from Newtonian approach motivated by the papers of Biler, Hilhorst, Nadzieja \cite{BHN} and \cite{BS, BSJ} for different distribution functions. This allows us to prove the critical mass theorem with analytic formula for the threshold mass while some numerical experiments are also conducted to indicate possible limits. In the last sections we thoroughly examine the relativistic Fermi--Dirac and Michie--King distribution functions and resulting Equations of State together with its asymptotics and interpretations. For these models numerics is available yielding the trajectory along unstable manifold and we depict the corresponding density profiles.

\section{Derivation of the dynamical systems in the spirit of Milne variables}

Consider the Tolman--Oppenheimer--Volkoff equation with the linear $p=\rho$ pressure--density relation
\be{TOV}
-rp'(r-2m)=(p+\rho)(m+4\pi r^3 p)\,
\ee
where $m=\int_0^r s^2\rho(s)ds$. Introducing the new variables interpreted vaguely as the rescaled mass--energy and the density of the system with $r=e^s$ as
\be{Syst:Auton}\left\{\begin{array}{l}
4\pi rx(\ln r )=m(r)\,,\\[6pt]
4\pi y(\ln r)=m'(r)=r^2\rho(r)\,.
\end{array}\right.\ee
one obtains the system
\be{Syst:Nonau}\left\{\begin{array}{l}
x'(s)=-\,x(s)+\,y(s)\,,\\[6pt]
y'(s)=2\,y(s)-\frac{8\pi y(s)(x(s)+y(s))}{1-8\pi x(s)}\,,
\end{array}\right.\ee
For this system (\ref{Syst:Nonau}) we shall show that the point $(0,0)$ is a saddle, with nonstable manifold going out tangentially to the vector $(3,1)$ while the other stationary point $(1/{16 \pi},1/{16\pi})$ is a sink which will be shown below. 

The most important feature and novelty is the existence of the global Lyapunov function
$$
L(x,y)=2+16\pi (y-3x)-\log(128\pi y(1-8\pi x)^3)\,
$$
governs convergence towards this point as was established in similar nonrelativistic model \cite{DSD} and started in \cite{MR0340701}. Indeed multiplying 
the equations (\ref{Syst:Auton}) for $x'$ by $x-2$ and $y'/y$ by $2(2-d)$ and summing them with added $y'$ one obtains
$$
\frac{d}{dt}L(x(t),y(t))=x'(t)(x(t)-2)+y'(t)-2(d-2)y'(t)/y(t)=-(x(t)-2)^2\le 0\,.
$$
Moreover, using Taylor expansion for $(x,y)\sim (\frac{1}{16\pi},\frac{1}{16\pi})$ we can see that
$$L(x,y)\sim \frac12 \left(x-\frac{1}{16\pi}\right)^2+\left(y-\frac{1}{16\pi}\right)^2\,.$$

\begin{figure}[ht]
\label{Fig:Lap}
\begin{center}
\includegraphics[height=5.5cm]{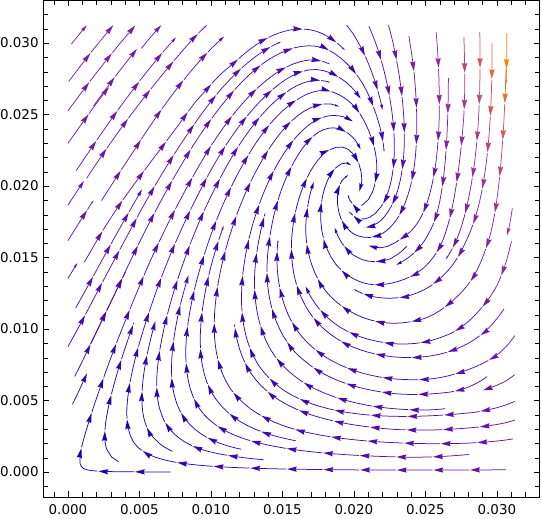}\hspace*{30pt}\includegraphics[height=5.5cm]{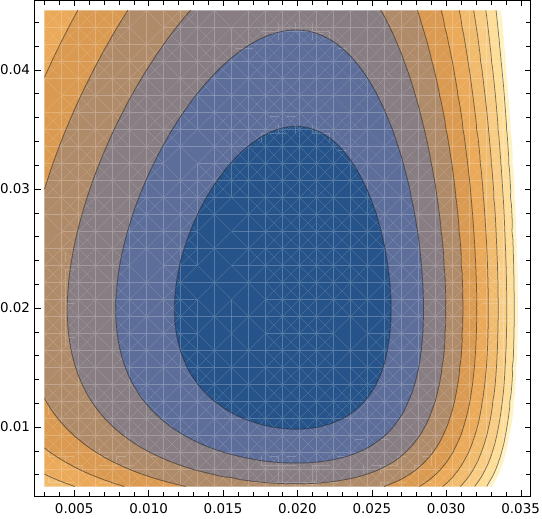}
\end{center}
\caption{\it\small Left: the sink for the flow at $(\frac{1}{16\pi},\frac{1}{16\pi})$ governed by the Lyapunov function $L$;
 Right: the level sets of the Lyapunov function $L$ }
\end{figure}

Furthermore, note that
$$
\lim_{s\rightarrow -\infty} x(s)e^{s} = 0\,,
$$
while assuming $\rho\in L^{\infty}$ guarantees $x(s)e^{-2s}$ to be bounded. Moreover, if $\rho$ is continuous then the following limit exists and is finite
$$
\lim_{s\rightarrow -\infty} x(s)e^{-2s}<\infty\,.
$$
Additionally,
$$
\rho_0=\rho(0)=|\rho|_\infty=\lim_{s\rightarrow -\infty} y(s)e^{-2s}<\infty\,.
$$
Indeed, one can see that using appropriate substitutions one can obtain
$$
\lim_{s\rightarrow -\infty} x(s)e^{-2s}=\lim_{s\rightarrow -\infty} Q'(s)e^{(1-d)s}=\lim_{r\rightarrow 0^+} r^{d-1}\rho(r) r^{1-d}=\rho(0)\,.
$$
\begin{lemma}\label{lem}
For any solution $(x,y)$ to (\ref{Syst:Nonau}), a finite $\rho_0=\lim_{s\rightarrow -\infty} y(s)e^{-2s}$ implies $$\lim_{s\rightarrow -\infty}\frac{x(s)}{y(s)}=\frac13\,.$$
\end{lemma}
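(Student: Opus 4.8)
The plan is to exploit the fact that the first equation of (\ref{Syst:Nonau}) is \emph{linear} in $x$ and can be integrated explicitly, which decouples the asymptotics of $x$ from the more delicate equation for $y$. First I would rewrite $x'=-x+y$ as $(e^{s}x)'=e^{s}y$ and integrate. Using the boundary condition $\lim_{s\to-\infty}e^{s}x(s)=0$ recorded just above the statement, the homogeneous contribution $Ce^{-s}$ is annihilated and one obtains the representation
\[
e^{s}x(s)=\int_{-\infty}^{s}e^{\sigma}y(\sigma)\,d\sigma .
\]
The improper integral converges because the hypothesis $y(\sigma)e^{-2\sigma}\to\rho_{0}$ forces the integrand to behave like $\rho_{0}e^{3\sigma}$ near $-\infty$, which is integrable there.

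The second step is to extract the leading-order behaviour of this integral. Writing $g(\sigma):=y(\sigma)e^{-2\sigma}$, so that $g(\sigma)\to\rho_{0}$ by assumption, the representation becomes $e^{s}x(s)=\int_{-\infty}^{s}e^{3\sigma}g(\sigma)\,d\sigma$, and I want the limit of $x(s)e^{-2s}=e^{-3s}\int_{-\infty}^{s}e^{3\sigma}g(\sigma)\,d\sigma$. As $s\to-\infty$ both numerator and denominator tend to $0$, so either a single application of L'H\^{o}pital's rule (the derivative ratio is $e^{3s}g(s)/(3e^{3s})=g(s)/3\to\rho_{0}/3$) or a direct estimate — splitting $g=\rho_{0}+(g-\rho_{0})$ and bounding the remainder by $\varepsilon\int_{-\infty}^{s}e^{3\sigma}\,d\sigma=\varepsilon e^{3s}/3$ — gives
\[
\lim_{s\to-\infty}x(s)e^{-2s}=\frac{\rho_{0}}{3}.
\]

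Finally I would combine the two asymptotics. Since $x(s)/y(s)=\bigl(x(s)e^{-2s}\bigr)/\bigl(y(s)e^{-2s}\bigr)$, the numerator tends to $\rho_{0}/3$ and the denominator tends to $\rho_{0}$, so the quotient tends to $1/3$, which is the claim. This last division is legitimate precisely because $\rho_{0}>0$, i.e.\ the solution has strictly positive central density; for the degenerate case $\rho_{0}=0$ the statement is vacuous, so I would state the positivity assumption explicitly.

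I expect the only genuinely delicate point to be the rigorous justification of the integral representation — namely that the improper integral converges and that the term $Ce^{-s}$ is killed by $\lim_{s\to-\infty}e^{s}x(s)=0$ — together with the interchange of limit and integral implicit in the L'H\^{o}pital step. Everything else is a routine consequence of the explicit solvability of the $x$-equation. An equivalent route, should one prefer to bypass the dynamical system, is to return to the definitions $x=m(r)/(4\pi r)$ and $4\pi y=r^{2}\rho(r)$ and use $m(r)\sim\rho(0)r^{3}/3$ as $r\to0^{+}$; this reproduces the factor $1/3$ transparently but relies on reverting to the original variables.
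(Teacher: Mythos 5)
Your proof is correct, and it takes a genuinely different route from the paper's. The paper applies de l'Hospital's rule directly to the quotient $x/y$: using the system (\ref{Syst:Nonau}) to rewrite $x'/y'$ in terms of $x/y$, it arrives at the self-consistent relation $N=(1-N)/2$ for $N=\lim_{s\to-\infty}x(s)/y(s)$, whence $N=1/3$. That argument is shorter and transfers verbatim to the nonlinear equation of state (Lemma \ref{lemm}), but it is really a consistency computation: l'H\^opital yields the existence of $\lim x/y$ only once $\lim x'/y'$ is known to exist, and here that limit is itself expressed through $N$, so the existence of the limit is tacitly presupposed (the rule also needs $x\to 0$ for the $0/0$ form, which again comes from the boundary behaviour of $m(r)$). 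Your argument closes exactly this gap: integrating the linear equation $x'=-x+y$ with the factor $e^{s}$, killing the homogeneous mode with $\lim_{s\to-\infty}e^{s}x(s)=0$, and estimating $e^{-3s}\int_{-\infty}^{s}e^{3\sigma}g(\sigma)\,d\sigma$ with $g=ye^{-2\,\cdot}\to\rho_{0}$, you obtain non-circularly that $x(s)e^{-2s}\to\rho_{0}/3$, and only then divide by $y(s)e^{-2s}\to\rho_{0}$. The price is that you invoke two facts not in the bare statement: the boundary condition $e^{s}x\to0$ (recorded in the paper just before the lemma, and implicitly needed by the paper's proof as well) and $\rho_{0}>0$. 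One small correction on the latter: calling the case $\rho_{0}=0$ ``vacuous'' is inaccurate --- the hypothesis is then still satisfiable, but your representation only gives $x(s)e^{-2s}\to 0$, so the ratio $x/y$ is indeterminate and the conclusion need not hold; promoting positivity to an explicit hypothesis, as you propose, is the right fix, and it is what the physical meaning of $\rho_{0}$ as central density supplies. A side benefit of your route: since the first equation is the same for every equation of state, your argument extends unchanged to the general system (\ref{Syst:General}), whereas the paper must rerun its l'H\^opital computation there.
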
 
\begin{proof} Using de l'Hospital rule together with the system (\ref{Syst:Nonau}) one gets the claim by
$$N=\lim_{s\rightarrow -\infty}\frac{x(s)}{y(s)}=\lim_{s\rightarrow -\infty}\frac{x'(s)}{y'(s)}=\lim_{s\rightarrow -\infty}\frac{1-\frac{x(s)}{y(s)}}{2-\frac{8\pi y(s)(x(s)/y(s)+1)}{1-8\pi x(s)}}=\frac{1-N}{2}.$$
\end{proof}

\begin{figure}[ht]
\label{Fig:Sol}
\begin{center}
\includegraphics[height=5.5cm]{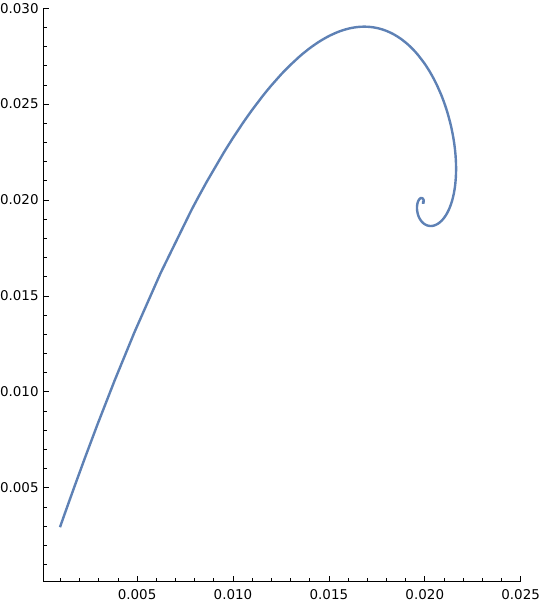}\hspace*{30pt}\includegraphics[height=5.5cm]{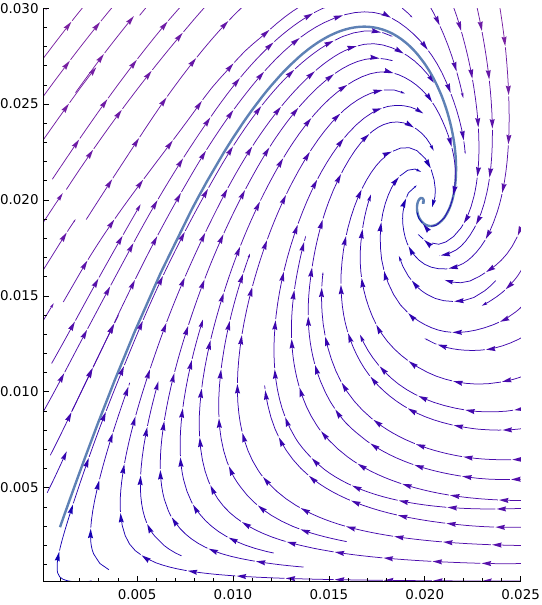}
\end{center}
\caption{\it\small Left: the heteroclinic orbit joining $(0,0)$ with $(\frac{1}{16\pi},\frac{1}{16\pi})$
 Right: the heteroclinic orbit along vector field converging to these points}
\end{figure}

Indeed the linearized version of the flow at $(0,0)$ reads
\be{Syst:Deffi}\left\{\begin{array}{l}
x'=y-x\\[6pt]
y'=2y
\end{array}\right.\ee
with the eigenvalues $-1, 2$ and the corresponding eigenvectors $(1,0)$ and $(1,3)$ resp.

The motivation for the derivation of the Lyapunov function comes from the work of Biler, Hilhorst, Nadzieja \cite{BHN} where the Newtonion case was treated. We multiply the first equation as follows
$$x'\left(-2+\frac{8\pi (x+y)}{1-8\pi x}\right)=(y-x)\left(-2+\frac{8\pi (x+y)}{1-8\pi x}\right)$$
so that after adding it with the second one $y'=y\left(2-\frac{8\pi (x+y)}{1-8\pi x}\right)$ get rid of the terms $2y$ and obtain
$$
x'\left(-2+\frac{8\pi (x+y)}{1-8\pi x}\right)+y'=-x\left(-2+\frac{8\pi (x+y)}{1-8\pi x}\right).
$$
Next our aim is to obtain the negative term on the right hand side so we use once again the second equation properly modified to the form with $C>0$ to be fixed
$$y'(y-C)/y=\left(2-\frac{8\pi (x+y)}{1-8\pi x}\right)(y-C)$$
and add to the last equality multiplied by $3$ to obtain
$$
3x'\left(-2+\frac{8\pi (x+y)}{1-8\pi x}\right)+4y'-Cy'/y=(C-y-3x)\left(-2+\frac{8\pi (x+y)}{1-8\pi x}\right)\,.
$$
Hence using $y=x+x'$ one gets
$$
3x'\left(-2+\frac{8\pi (2x+x')}{1-8\pi x}\right)+4y'-Cy'/y=\frac{(C-y-3x)(-2+24\pi x +8\pi y)}{1-8\pi x}\,.
$$
In what follows we fix the constant $C=1/{(4\pi )}$ and omit the term $x'^2$ to obtain
$$
-6x'+48\pi xx'/(1-8\pi x) +4y'-\frac{1}{4\pi}y'/y=\frac{-(1-12\pi x -4\pi y)^2}{2\pi (1-8\pi x)}\le 0\,.
$$
and after multiplication by $4\pi$ and integration using $8\pi x -1+1$ one ends up with
$$
\left(-48\pi x - 3\log (1-8\pi x)  +16\pi y -\log y \right)'\le 0
$$
whence the Lyapunov function follows
$$L(x,y)=2+16\pi (y-3x)-\log(128\pi y(1-8\pi x)^3)\,,$$
where we fixed constant so that
$$
L(1/(16\pi), 1/(16\pi))=0\,.
$$
Next we establish {\bf the critical mass theorem}.
\begin{theorem}
For the system (\ref{Syst:Nonau}) we have
$$
m/(4\pi r)=x<  9/(96\pi)=(3/4)\cdot (1/8\pi)=(3/8)\cdot (1/(16\pi ))\,
$$
or
$$
\frac{m}{r}\le \frac38\,.
$$
This corresponds to the improvement of the Buchdahl limit and Schwarzschild bound
$$
\frac{GM}{Rc^2}<3/8<4/9<1/2\,,
$$
or equivalently
$$
\frac{2GM}{Rc^2}<3/4<8/9<1\,.
$$
We could have added more estimate with our numeric estimate $0.55$ is better than that of Bondi \cite{B1, B2}, under assumption $\rho \ge 3p$, which is better than our theoretical one but with $\rho=p$ and general Bondi with nonnegative density $\rho\ge 0$ as the last one listed below
$$
\frac{2GM}{Rc^2}<0.55<0.64<0.75<0.97\,.
$$
In fact the maximum value of $2GM/Rc^2$ for a fluid described by a linear equation of state in general relativity that we have discussed above was first obtained by approximation in Chavanis papers \cite{a,b}. In particular, even better than our estimate, the numerical value $0.544$ in the case $P=rho$ appeared in Sec. 3.5 and in Fig. 12 of \cite{a} and in Secs. 2.6 and 4.2 of \cite{b} by the use of the Milne variables.

This can be rephrased as the lower limits for the radius where our estimate improves this of Buchdahl and Schwarzschild radius as the following inequalities indicate
$$
\frac{Rc^2}{GM}>2\frac23>2\frac14 >2\,.
$$
\end{theorem}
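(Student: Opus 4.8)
\emph{Approach.} The plan is to pass to the phase plane of the autonomous system~(\ref{Syst:Nonau}) on its natural domain $0<x<\tfrac{1}{8\pi}$ and to bound $\sup x$ along the physically relevant orbit. By Lemma~\ref{lem} every solution with finite central density is, up to translation in $s$, the branch of the unstable manifold of the saddle $(0,0)$ leaving along the eigenvector $(1,3)$ (so $y\sim 3x$ near the origin), and it converges to the sink $(\tfrac{1}{16\pi},\tfrac{1}{16\pi})$. The maximum of $x$ is attained where $x'=0$, i.e.\ where $y=x$; there $x''=y'-x'=y'|_{y=x}=2x-\frac{16\pi x^2}{1-8\pi x}$, and the condition $x''\le 0$ for a maximum forces $x\ge\tfrac{1}{16\pi}$. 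Thus every turning point of $x$ lies in the half-plane $\{x\ge\tfrac{1}{16\pi}\}$, which is where the estimate will be localised.

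\emph{First barrier.} First I would show the orbit stays below the line $y=3x$. On that line $x'=2x>0$ and $\frac{y'}{x'}=3-\frac{48\pi x}{1-8\pi x}<3$, so the vector field is strictly flatter than the line and an orbit meeting it must pass beneath; since the quadratic correction to the unstable manifold, $y=3x-\frac{96\pi}{5}x^2+\cdots$, already lies below $y=3x$, a standard no-crossing argument gives $y<3x$ for all $x>0$ along the orbit. In particular, at every instant when $x=\tfrac{1}{16\pi}$ we have $y<\tfrac{3}{16\pi}$, so the combination $S:=y+3x$ satisfies $S<\tfrac{3}{16\pi}+\tfrac{3}{16\pi}=\tfrac{3}{8\pi}$ there.

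\emph{Second barrier and conclusion.} The decisive barrier is the line $S=\tfrac{3}{8\pi}$. A short computation shows that on it $y'=-\tfrac{3}{8\pi}$, hence $S'=y'+3(y-x)=\tfrac{6}{8\pi}-12x$, which is $\le 0$ exactly when $x\ge\tfrac{1}{16\pi}$. Therefore the region $\{x\ge\tfrac{1}{16\pi}\}\cap\{S<\tfrac{3}{8\pi}\}$ is forward invariant, and by the previous paragraph the orbit enters $\{x\ge\tfrac{1}{16\pi}\}$ always from inside $\{S<\tfrac{3}{8\pi}\}$; so $S<\tfrac{3}{8\pi}$ holds at every turning point of $x$. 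At such a point $y=x$, whence $S=4x$, and $4x<\tfrac{3}{8\pi}$ gives $x<\tfrac{3}{32\pi}$. Since $\sup x$ is attained at a turning point, this yields $x<\tfrac{3}{32\pi}=\tfrac{9}{96\pi}$ on the whole orbit, i.e.\ $2GM/Rc^2=8\pi x<\tfrac34$ and $m/r<\tfrac38$, which is the claim.

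\emph{Main obstacle.} I expect the difficulty to lie in the rigorous justification of the two invariant barriers rather than in the algebra. For the first, the tangency of the unstable manifold to $y=3x$ at the origin makes the inequality borderline there, so one genuinely needs the sign of the quadratic coefficient, together with the strict slope inequality $y'/x'<3$ on the line, to trap the orbit underneath. One must also secure the dynamical bookkeeping: that every local maximum of $x$ occurs with $y=x$ and $x\ge\tfrac{1}{16\pi}$ (from the sign of $x''$), and that $\sup x$ is realised at such a maximum (from the convergence to the sink). It is worth stressing that the bound is not sharp---the numerically observed maximum corresponds to $2GM/Rc^2\approx 0.55<\tfrac34$---so no optimisation of the barriers is needed; it suffices that the line $S=\tfrac{3}{8\pi}$ is crossed nowhere on $\{x\ge\tfrac{1}{16\pi}\}$.
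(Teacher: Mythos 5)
Your proof is correct, and it takes a genuinely different route from the paper's. The paper traps the orbit using its global Lyapunov function $L$: it shows the orbit stays inside the triangle bounded by $y=3x$, $y=x$ and the isocline $y=-3x+1/(4\pi)$ until it enters the sublevel set $\{L\le 1-\log 2\}$, which is then forward invariant because $L$ decreases along trajectories; the numerical bound comes from locating the rightmost point of the transcendental level curve $L=1-\log 2$ (at $y=1/(16\pi)$) by checking the sign of $L-(1-\log 2)$ at $x=8/(96\pi)$ and $x=9/(96\pi)$. You replace all of this by two \emph{linear} barriers: the tangent line $y=3x$ (where your slope computation and the quadratic coefficient $-96\pi/5$ of the unstable manifold are both correct), and the line $S:=y+3x=3/(8\pi)$, on which the identity $y'\equiv -3/(8\pi)$ indeed holds because $y=\tfrac{3}{8\pi}(1-8\pi x)$ cancels the denominator, giving $S'=\tfrac{6}{8\pi}-12x\le 0$ exactly for $x\ge 1/(16\pi)$; combined with the observation that any local maximum of $x$ lies on $y=x$ with $x\ge 1/(16\pi)$, this yields $4x=S<3/(8\pi)$, i.e.\ $x<3/(32\pi)=9/(96\pi)$ --- the paper's constant, obtained by exact algebra with no transcendental equation and no Lyapunov function. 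Your argument is also logically tighter at the one point where the paper is vague, namely what bounds $x$ between the orbit's exit from the triangle and its capture by the absorbing set. What the paper's heavier machinery buys instead is the convergence to the sink and the generalization to $p=\kappa\rho$ in the following section.

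Two small repairs you should make. First, the region $\{x\ge 1/(16\pi)\}\cap\{S<3/(8\pi)\}$ is \emph{not} literally forward invariant: orbits leave it through the edge $x=1/(16\pi)$ whenever $y<x$ there. What you actually need, and what your computation does prove, is that on each excursion into $\{x\ge 1/(16\pi)\}$ the orbit enters with $S<3/(8\pi)$ (since $y<3x$ forces $S<6x=3/(8\pi)$ on the entry line) and can never cross $S=3/(8\pi)$ while $x\ge 1/(16\pi)$, because $S'<0$ on that line for $x>1/(16\pi)$ and the single point with $S'=0$, namely $(1/(16\pi),3/(16\pi))$, lies on $y=3x$, which your first barrier excludes; state it this way. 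Second, you use that $\sup x$ is attained at a turning point or equals the sink value $1/(16\pi)$; this requires the orbit to be bounded and to converge to the sink, which is legitimate but should be credited explicitly to the paper's Lyapunov function (or derived from Poincar\'e--Bendixson together with your barriers, which already keep the orbit away from the singular line $x=1/(8\pi)$).
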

\begin{proof}
For the proof we shall use once again the Lyapunov function and the dynamical system approach.
First of all we shall show that the trajectories region bounded by the lines $y=3x$ in red, $y=-3x+1/(4\pi )$ and $y=x$ must enter finally the region bounded by the properly chosen level set for the Lyapunov function
$$
L(x,y)=1-\log (2)>0\,.
$$
Thus the maximal value of $x$ in the set $L(x,y)\le 1-\log (2)$ will yield the maximal value of the mass $m$ as stated in the theorem. 

\begin{figure}[ht]
\label{Fig:Sol}
\begin{center}
\includegraphics[height=5cm, width=5cm]{cur.pdf}\hspace*{30pt}\includegraphics[height=5.5cm, width=7cm]{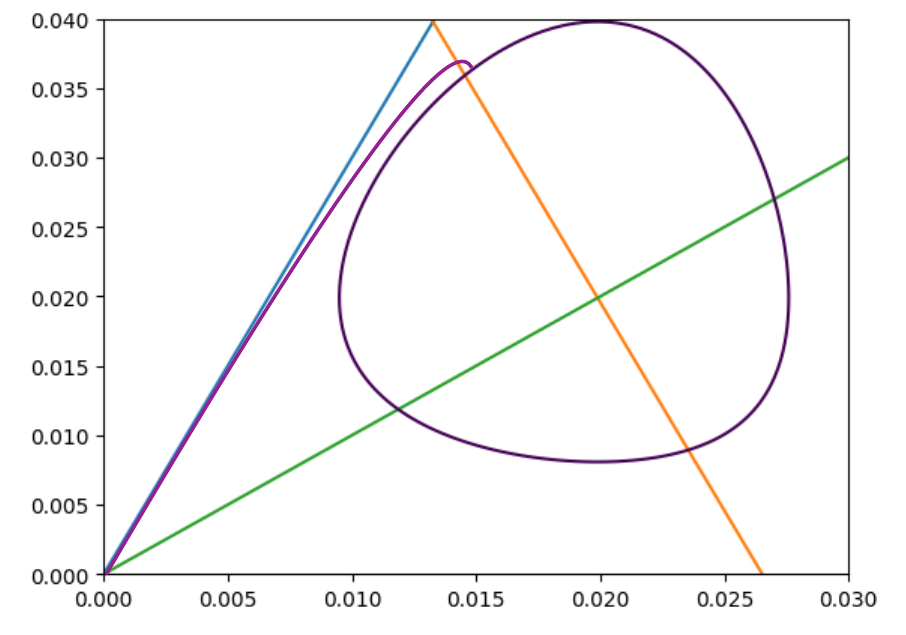}
\end{center}
\caption{\it\small Left: the heteroclinic orbit joining $(0,0)$ with $(\frac{1}{16\pi},\frac{1}{16\pi})$
 Right: level set for Lyapunov function; upper triangle with orange exit}
\end{figure}

In the sequel we shall refer to the lines and curves with special mention of color in which it appears above in the figure on the right. The green line is isocline with $x'=0$ i.e. $y=x$. The orange line is isocline with $y'=0$ i.e. $1-4\pi x - 8\pi y=0$. The blue line is the tangent line to unstable manifold, i.e. the curve of our interest emerging from $(0,0)$ at angle tangent to $y=3x$. Finally, we choose the level set of the Lyapunov function in violet color such that it is tangent to the horizontal line intersecting the point of the intersection of the two lines: the blue and the orange.

To this end notice that if $y=3x$ in blue then for $w=8\pi x$ the ratio
$$
y'/x'=3(1-4w/(1-w))\in [-3,3]
$$
if we assume $x\in [0,1/(24\pi )]$ or $w\in [0,1/3]$. This implies that all the time the vector field in (\ref{Syst:Nonau}) is directed towards interior of the triangle bounded by the three lines. Next if $y=x$ in green then $x'=0$ and $y'=\frac{2y(1-16\pi y)}{1-8\pi y}<0$ which means that up to a point $(1,1)/(16\pi )$ our vector field is directed to the left thus inwards the region bounded by the curves.
Thus there are two possibilities either our trajectory hits the boundary of $L=1-\log (2)$ in violet and enters the region $L\le 1-\log (2)$ directly with increasing $y$ or it hits the line $y=-3x+1/(4\pi )$ in orange outside the absorbing set $L\le 1-\log (2)$ in violet but then it turns down at the turning point on the line $y=-3x+1/(4\pi )$ in orange and finally enters the set $L\le 1-\log (2)$ with increasing $y$. In both cases our trajectory is trapped in the set $L\le 1-\log (2)$ in violet. Thus $x$ value of our trajectory must be bounded by the limit of the $L=1-\log (2)$ which is somewhere in the interval $(8,9)/(96\pi)$. Indeed is is easy to calculate the value of $L$ for $y=\frac{1}{16\pi}$ at $x=8/(96\pi )$ which is smaller and at  $x=9/(96\pi )$ which is bigger than $1-\log(2)$. Therefore the $x$ values of the whole our trajectory leaving $(0,0)$ at the angle adjacent to the vector $(3,1)$ mus lie on the left of $9/(96\pi)$ which ends the proof. Note that the intersection of the level set $L=1-\log(2)$ with the diagonal $y=x$ being the turning point for the trajectories in $x$ variable could potentially improve our result but alas the equation
$$2-32\pi x - \log(128\pi x(1-8\pi x)^3)=1-\log(2)$$
does not possesses a solution between $7/(96\pi)$ and $8/(96\pi)$, but between $8/(96\pi)$ and $9/(96\pi)$ which we already know.
\end{proof}

\section{Lyapunov function for a general linear equation of state}
In this section we consider the system of equations for the general linear equation of state generalizing the results from the previous section obtained for $p=\rho$ and allowing also the relations $p=\rho/3$ and $p=2\rho/3$ motivated by the asymptotics of the nonlinear equation of state discussed in the next section. Thus we consider the general equation of state in the form
\be{pkr}
p=\kappa \rho\,.
\ee
Then the system following from the TOV equation follows
\be{Syst:Gen}\left\{\begin{array}{l}
x'(s)=-\,x(s)+\,y(s)\,,\\[6pt]
y'(s)=2\,y(s)-\frac{1+\kappa}{2\kappa}\cdot \frac{y(s)(x(s)+\kappa y(s))}{1- x(s)}\,.
\end{array}\right.\ee
Note that in comparison with (\ref{Syst:Nonau}) or (\ref{Syst:General}) we multiplied both equations by $8\pi$ and the variables $x,y$ also differ by the same factor, without changing and thus abusing the notation. Now apart from $(0,0)$ we have another stationary point 
\be
((x_\kappa,y_\kappa)=\left(\frac{4\kappa}{(1+\kappa)^2+4\kappa},\frac{4\kappa}{(1+\kappa)^2+4\kappa}\right)
\ee
 corresponding to the singular density $\rho$. We shall examine the stability of these stationary points and solutions later while now we shall generalize the Lyapunov function formula obtained in the previous section to cover (\ref{pkr}) as follows
\be{V}
V=2y-(5+1/\kappa)x-2x_k \log \left(y(1-x)^{\delta_\kappa}\right)+C_\kappa
\ee 
where the exponent $\delta_\kappa$ is defined by
\be{}
8\kappa^2\delta_\kappa=(5\kappa+1)(\kappa+1)^2
\ee
while the constant $C_\kappa$ is chosen such that $V(x_\kappa,y_\kappa)=0$, to be more specific
\be{}
C_\kappa=(3+1/\kappa)x_\kappa+2x_\kappa \log \left( x_\kappa(1-x_\kappa)^{\delta_\kappa}\right)\,.
\ee
Now we shall prove directly that indeed the function $V$ defined by (\ref{V}) decreases along the trajectories of the system (\ref{Syst:Gen}). To this end we calculate for
$$
V=C+2y-\gamma x -\beta \log (y(1-x)^\delta)
$$
the derivative $V'=\frac{\partial V}{\partial x} x'+\frac{\partial V}{\partial y} y'$ equals after multiplication by $(1-x)$ to
$$-(1+\kappa)y^2+y(\delta\beta-\gamma+4+\beta(1+\kappa)/2-\gamma x^2+x(\gamma-\delta\beta+2\beta+\beta(1+\kappa)/(2\kappa)-2\beta\,.
$$
Thus it can be rephrased after fixing data $\gamma=5+1/\kappa$ and $\beta=2x_\kappa$ as
$$
(1-x)V'=-(5+1/\kappa) (x-x_\kappa)^2-(1+\kappa)(y-y_\kappa)^2
$$
showing that indeed the function $V$ is the Lyapunov function for the system (\ref{Syst:Gen}).

Next, we verify the stability of the critical points corresponding to stationary solutions for the system (\ref{Syst:Gen}). First of all we calculate the Jacobian matrix of the right hand side of (\ref{Syst:Gen}) obtaining the linearized version
\be{Syst:Gen}\left\{\begin{array}{l}
v'=-\,v+\,w\,,\\[6pt]
w'=-\frac{(1+\kappa)y}{\kappa(1-x)^2}\,v+\left(2-\frac{1+\kappa}{2\kappa}\cdot \frac{x+2\kappa y}{1- x}\right)w\,.
\end{array}\right.\ee
Therefore, at $(0,0)$ we have saddle point with the eigenvalue $-1$ corresponding to stable manifold tangent to $(1,0)$ and the eigenvalue $2$ corresponding to the unstable manifold starting from $(0,0)$ tangentially to the vector $(1,3)$ being of special interest to us as the heteroclinic orbit connecting to $(x_k,y_k)$. The latter point can be shown as stable sink at least for not too large values of $\kappa$ e.g. $\kappa\le 9$, while for larger values it is still a stable point but of different nature, e.g. a stable node. Note, that for $\kappa=1$ we have the eigenvalues at this sink equal to $-1+2i, -1-2i$.

\section{Dynamical system for general equation of state}
Note that for the general equation of state $$p=p(4\pi \rho)=p(r^{-2}y)$$ one obtains for $r=e^s$ as before
\be{Syst:General}\left\{\begin{array}{l}
x'(s)=-\,x(s)+\,y(s)\,,\\[6pt]
y'(s)=2\,y(s)-\frac{4\pi (y+r^2p(yr^{-2}))(x+r^2p(yr^{-2}))}{(1-8\pi x)p'(r^{-2}y)}\,,
\end{array}\right.\ee
Apparently if $$p=\rho$$ we end up with the aforementioned autonomous system (\ref{Syst:Nonau}) considered in the previous sections. Moreover, for the polytropic relativistic model
$$
\rho = Cp^{1/\Gamma}+\frac{p}{1-\Gamma}
$$
with $\Gamma \in (1,2)$ e.g. $\Gamma=4/3$ whence
$$
\rho = Cp^{3/4}+3p
$$
whereas for $\Gamma=7/5$ we get
$$
\rho = Cp^{5/7}+5p/2\,.
$$
Note that these examples are closely related to our Fermi--Dirac and Michie--King forms presented below, sharing some asymptorics. It is worth noting that in the \cite{Chr} remark on the limits to the Equation of State is made. Namely,
$$
p\le \rho
$$
so that speed of sound does not exceed speed of sound, for details see also Speck's PhD thesis \cite{Spe}, who repeats the argument of Christodoulou. The combination of power like and linear function was also mentioned by Heinzle, R\"ohr, Uggla \cite{HRU}. Moreover, Makino in \cite{Mak} using Poincar\'e--Bendixson theory for planar analysis of dynamical system with linear equation of state proved that the critical point corresponding to the singular density is the only point in $\omega$--limit set. Note that the lemma proved in linear case can be extended to a nonlinear one under mild assumption on the relation $\rho$ vs $p$ given by the equation of state, e.g. that it is $C^1$.
\begin{lemma}\label{lemm}
For any solution $(x,y)$ to (\ref{Syst:General}) with $C^1$ function $p$, a finite $$\rho_0=\lim_{s\rightarrow -\infty} y(s)e^{-2s}$$ implies $$\lim_{s\rightarrow -\infty}\frac{x(s)}{y(s)}=\frac13\,.$$
\end{lemma}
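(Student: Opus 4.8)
The plan is to mirror the one--line computation that proved Lemma~\ref{lem} in the linear case, applying de l'Hospital's rule to the quotient $x/y$, and to show that the extra nonlinear pressure contribution present in (\ref{Syst:General}) but absent from (\ref{Syst:Nonau}) is asymptotically negligible. Writing $r=e^{s}$ and abbreviating $P(s):=r^{2}p(r^{-2}y(s))=e^{2s}p(e^{-2s}y(s))$, the second equation of (\ref{Syst:General}) becomes $y'=2y-4\pi(y+P)(x+P)/[(1-8\pi x)\,p'(r^{-2}y)]$. The preliminary observations are the elementary ones: since $e^{-2s}y(s)\to\rho_{0}$ we have $y(s)\to0$, and since $x(s)e^{-2s}$ was shown above to be bounded (for $\rho\in L^{\infty}$), the product $x(s)=\bigl(x(s)e^{-2s}\bigr)e^{2s}\to0$ as $s\to-\infty$.

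First I would control the pressure term, which is the genuinely new ingredient. Because $p$ is assumed $C^{1}$, both its value and its derivative are continuous at the limiting argument, so $p(e^{-2s}y)\to p(\rho_{0})$ and $p'(e^{-2s}y)\to p'(\rho_{0})$; consequently $P(s)=e^{2s}p(e^{-2s}y)\to0$, while the ratio $P/y=p(e^{-2s}y)/(e^{-2s}y)$ stays bounded (it tends to $p(\rho_{0})/\rho_{0}$ when $\rho_{0}>0$, and to $p'(0)$ when $\rho_{0}=0$ with $p(0)=0$, again by the $C^{1}$ hypothesis). Hence $y+P=y(1+P/y)$ is a bounded multiple of $y$, and the whole correction factorises as
$$
\frac{4\pi(y+P)(x+P)}{y(1-8\pi x)\,p'(r^{-2}y)}=\frac{4\pi\,(1+P/y)\,(x+P)}{(1-8\pi x)\,p'(r^{-2}y)},
$$
whose right--hand side tends to $0$: the factors $1+P/y$, $1-8\pi x$ and $p'(r^{-2}y)$ all converge to finite limits (with $p'(\rho_{0})\neq0$), whereas $x+P\to0$. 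Therefore $y'/y\to2$, exactly as in the linear model. With this in hand the argument of Lemma~\ref{lem} transfers essentially verbatim: setting $N=\lim_{s\to-\infty}x(s)/y(s)$ and applying de l'Hospital,
$$
N=\lim_{s\to-\infty}\frac{x'}{y'}=\lim_{s\to-\infty}\frac{x'/y}{y'/y}=\frac{1-N}{2},
$$
since $x'/y=1-x/y\to1-N$ and $y'/y\to2$; solving yields $N=1/3$.

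The step I expect to be the main obstacle is not the algebra but two analytic points hidden inside it. First, de l'Hospital only produces the value of $\lim x/y$ once that limit is already known to exist, and the equation $N=(1-N)/2$ presupposes this existence; to close the gap rigorously I would pass to the rescaled variable $u=xe^{-2s}$, which satisfies the scalar affine equation $u'=-3u+e^{-2s}y$ whose forcing $e^{-2s}y$ tends to $\rho_{0}$, and then use the integrating factor $e^{3s}$ to show that the only bounded solution obeys $u(s)\to\rho_{0}/3$, whence $x/y=u/(e^{-2s}y)\to1/3$; this simultaneously supplies the missing existence and gives an independent proof. Second, one must keep the denominator $p'(r^{-2}y)$ bounded away from zero, which is precisely where the $C^{1}$ assumption, together with the physically natural monotonicity $p'>0$, is used: a degeneracy of $p'$ at the central density would destroy the factorisation above and is therefore the one hypothesis that cannot be dropped.
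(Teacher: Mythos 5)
Your core argument is exactly the paper's proof: apply de l'Hospital to $x/y$, divide the second equation of (\ref{Syst:General}) by $y$, and use the $C^1$ hypothesis (so that $p$ and $p'$ have finite limits at $\rho_0$) together with $y(s)\to 0$, $x(s)\to 0$ to make the nonlinear correction term vanish, arriving at $N=(1-N)/2$ and hence $N=1/3$; the paper's displayed computation is precisely your factorisation with $P/y=p(e^{-2s}y)/(e^{-2s}y)$. Where you go beyond the paper is in the two analytic points you flag, and both are genuine. First, the circularity is real: de l'Hospital plus the system only yields the implication ``if $N=\lim x/y$ exists then $N=1/3$,'' and the paper (like its linear-case counterpart, Lemma \ref{lem}) silently assumes this existence. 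Your repair via $u=xe^{-2s}$, $u'=-3u+e^{-2s}y$, with integrating factor $e^{3s}$, is sound and does supply existence, with two caveats you should state: it requires $u$ to be bounded as $s\to-\infty$, which is not a consequence of the lemma's hypotheses on the ODE system alone --- the homogeneous mode $x\sim Ce^{-s}$ (a point mass at the origin) has $\lim e^{-2s}y$ finite but $x/y\to\infty$ --- and is exactly what the paper secures separately by its earlier remark that $\rho\in L^\infty$ makes $x(s)e^{-2s}$ bounded; and the final division $x/y=u/(e^{-2s}y)\to(\rho_0/3)/\rho_0$ needs $\rho_0>0$, the case $\rho_0=0$ requiring your alternative reading $P/y\to p'(0)$. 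Second, your insistence that $p'(\rho_0)\neq 0$ makes explicit a nondegeneracy condition that the paper uses tacitly when it divides by $p'$; this hypothesis indeed cannot be dropped. So: same method at heart, but your version closes a gap the paper leaves open, at the modest price of naming the boundedness (regular-centre) assumption explicitly.
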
 
\begin{proof} Using de l'Hospital rule together with the system (\ref{Syst:General}) one gets the claim with $r=e^s$ by 
$$N=\lim_{s\rightarrow -\infty}\frac{x(s)}{y(s)}=\lim_{s\rightarrow -\infty}\frac{x'(s)}{y'(s)}=\lim_{s\rightarrow -\infty}\frac{1-\frac{x(s)}{y(s)}}{2-\frac{4\pi y(s)\left(1+\frac{p(y(s)r^{-2}}{y(s)r^{-2}}\right)\left(\frac{x(s)}{y(s)}+\frac{p(y(s)r^{-2}}{y(s)r^{-2}}\right)}{(1-8\pi x(s))p'(r^{-2}y(s)}}.$$
But due to the fact that $\lim_{s\rightarrow -\infty}y(s)=0$ and that all the limits at $\rho_0$ of both $p$ and $p'$ exist we have that in fact
$$
N=\frac{1-N}{2}\,
$$
which ends the proof.
\end{proof}
Note that this lemma gives us the direction of unstable manifold where our trajectory spins off.

\section{Equation of state in relativistic Michie--King model}
Let $T$ be the temperature and $k$ be the Boltzmann constant, while $\varepsilon_c$ some threshold value for the energy at which the particles escape from the system. 
We start with the formula for the density distribution in the phase space with the vector momentum $\hat{p}$ 
$$
\rho = \frac{2m}{h^3}\int_0^\infty f_c(\hat{p})\left( 1+\frac{\varepsilon(\hat{p})}{mc^2}\right)d^3\hat{p}
$$
where the distribution $f_c$ depending on the chemical potential $\mu$ reads as
$$
f_c(\hat{p})=\left( \frac{1-\exp((\varepsilon(\hat{p}) - \varepsilon_c)/kT)}{\exp((\varepsilon(\hat{p})-\mu)/kT)+1}\right)_+
$$
while pressure denoted $p$ in this section is of the form
$$
p=\frac{4}{3h^3}\int_0^\infty f_c(\hat{p}) \varepsilon(\hat{p}) \frac{1+\varepsilon(\hat{p})/2mc^2}{1+\varepsilon(\hat{p})/mc^2}d^3\hat{p}
$$
where $\varepsilon(\hat{p})$ is the particle kinetic energy depending on the momentum $\hat{p}$
$$
\varepsilon(\hat{p})=\sqrt{c^2\hat{p}^2+m^2c^4}-mc^2\,.
$$
This leads to the pressure dependence $p(\rho)$ behaving like $\rho^{7/5}$ at zero and like $\rho/3$ at infinity, for the related formula see e.g. \cite{Sag}. Hence the simplified toy formula follows
$$
p(\rho)=\left((p_\infty\rho)^{-1}+(p_0\rho)^{-7/5}\right)^{-1}
$$
and we are left to establish the coefficients $p_\infty$ and $p_0$ governing the behaviour of the pressure $p$ at infinity and at zero, respectively, which will follow from the formulas stated above. One should note, that non-relativistic Michie--King distribution differs only in the behaviour at infinity where the exponent was $5/3$ in \cite{BSM} instead of relativistic $1$ which makes the pressure grow faster with the growth in density in the nonrelativistic case for large values of the density $\rho$, cf. \cite{BSJ}.

Due to the existence of the time--like Killing vector in the metric implies that along geodesics
$$
(\varepsilon+mc^2)/T\equiv {\rm const}=mc^2/T_R
$$ 
since at $R$ we have $\varepsilon_c=0$. Equivalently
$$
(\varepsilon_c/(mc^2)+1)T_R/T=1\,.
$$ 
See Meraffina, Ruffini, Klein and Tolman.
Next define
$$
\kappa=kT_R/(mc^2)
$$
and
$$
y=\varepsilon_c/(kT)\,.
$$
Then
$$
1-y\kappa=T_R/T\,.
$$
We introduce $x$ via the change of variables averaging over the $\hat{p}=mv/\sqrt{1-|v|^2/c^2}$
$$
xkT=\varepsilon(\hat{p})
$$
thus
$$
1+\varepsilon(\hat{p})/(mc^2)=1+x\kappa T/T_R=1+x\kappa/(1-y\kappa) 
$$
and
$$
\hat{p}^2=m^2c^2 \frac{x\kappa}{1-\kappa y}\left( \frac{x\kappa}{1-y\kappa}+2\right)
$$
whence
$$
\frac{\kappa m^2 c^2}{1-y\kappa}(x\kappa/(1-y\kappa)+1) dx=|\hat{p}|d|\hat{p}|
$$
while
$$
\rho = \frac{8\pi c^3\kappa m^4}{h^3(1-\kappa y)}\int_0^y \left( \frac{\kappa x}{1-\kappa y} +1\right)^2\left( \left(\frac{\kappa x}{1-\kappa y}+1\right)^2-1\right)^{1/2}\frac{1-\exp(x-y)}{1+\chi \exp(x-y)}dx
$$
or
$$
\rho = \frac{8\sqrt{2}\pi c^3\kappa^{3/2} m^4}{h^3(1-\kappa y)^{3/2}}\int_0^y x^{1/2}\left( \frac{\kappa x}{1-\kappa y} +1\right)^2\left( \frac{\kappa x/2}{1-\kappa y}+1\right)^{1/2}\frac{1-\exp(x-y)}{1+\chi \exp(x-y)}dx
$$
where $\chi=\exp((\varepsilon_c-\mu)/(kT))$ is constant, see Merafina--Ruffini, Klein. Now as $y\sim 0$ we have
$$
\rho(y) \sim \frac{8\sqrt{2}\pi m\kappa^{3/2}}{1+\chi} \left(\frac{ mc}{h}\right)^3\int_0^y \sqrt{x} (y-x) dx=\rho_0 y^{5/2}
$$
where 
$$\rho_0= \frac{32\sqrt{2}\pi m\kappa^{3/2}}{15(1+\chi)} \left(\frac{ mc}{h}\right)^3\,.$$ 
If $y\sim 1/\kappa$ then 
$$
\rho(y) \sim \frac{\rho_\infty}{(1-\kappa y)^\gamma} 
$$
where $\gamma=4$ and
$$\rho_\infty=8\pi c^3 m^4 h^{-3} \int_0^{1} w^3 \frac{1-\exp((w-1)/\kappa)}{1+\chi \exp((w-1)/\kappa)} dw\,.$$ 
Whereas the pressure is given by
$$
p=\frac{\eta 2\sqrt{2}\kappa}{1-\kappa y}\int_0^y \left(\frac{\kappa x}{1-\kappa y}\right)^{3/2}\left( \frac{\kappa x/2}{1-\kappa y} +1\right)^{3/2}\frac{1-\exp(x-y)}{1+\chi \exp(x-y)}dx\,,
$$
or 
$$
p=\eta \left(\frac{\kappa}{1-\kappa y}\right)^{5/2}\int_0^y x^{3/2}\left( \frac{\kappa x}{1-\kappa y} +2\right)^{3/2}\frac{1-\exp(x-y)}{1+\chi \exp(x-y)}dx\,,
$$
where
$$
\eta=\frac{8\pi m^4c^5}{3h^3}\,.
$$
Hence as $y\sim 0$ we have
$$
p(y)\sim p_0 y^{7/2}\,.
$$
where
$$
p_0=\frac{64\sqrt{2}\pi m^4c^5\kappa^{5/2}}{105(1+\chi)h^3}\,.
$$
Combining the behaviour of $\rho$ and $p$ at zero and substituting $y$ we obtain as $\rho\sim 0$
$$
p(\rho)\sim p_0 (\rho/\rho_0)^{7/5}\,.
$$
As far as the asymptotics at infinity is concerned first note that as $y\sim1/\kappa$ then
$$
p(y)\sim \frac{c^2\rho_\infty}{3(1-\kappa y)^4} 
$$
with exactly the same constant which appeared in the asymptotics of $\rho\sim\infty$ reading
$$\rho_\infty=8\pi c^3 \kappa^4 m^4 h^{-3} \int_0^{1/\kappa} x^3 \frac{1-\exp(x-1/\kappa)}{1+\chi \exp(x-1/\kappa)} dx\,.$$
Therefore, as $\rho\sim \infty$ we have
$$
p(\rho)\sim c^2\rho/3\,.
$$
One should refer these results to the papers of Chavanis \cite{e, f, a, b}. For the relativistic fermionic King model studied above the equation of state is $p\sim \rho^{7/5}$ i.e. the index of polytrope $\rho^{1+1/n}$ with $n=5/2$ for small $\rho$ (i.e. in the halo) like in the nonrelativistic case and $p=\rho_g^{4/3}$ ($\rho_g$ being the number density) i.e. $n=3$ for large $\rho$ (i.e. in the core). Furthermore, the polytropic equation of state $p=\rho_g^{4/3}$ in terms of $\rho_g$ corresponds to a linear equation of state $p=\rho/3$ in terms of $\rho$ (energy density) in accordance with the asymptotics established above. To be more specific, this was shown in \cite{a} (see in particular Sec. 3.2 and the last paragraph of the conclusion), and one can refer also to \cite{e,f, b}. We depict for $a=0.1$ and $a=0.2$ the relevant trajectory
$$
p(\rho)=\rho^{7/5}/(a+\rho^{2/5}/3)
$$

\begin{figure}[ht]
\label{Fig:rMK}
\begin{center}
\includegraphics[height=5cm, width=5.8cm]{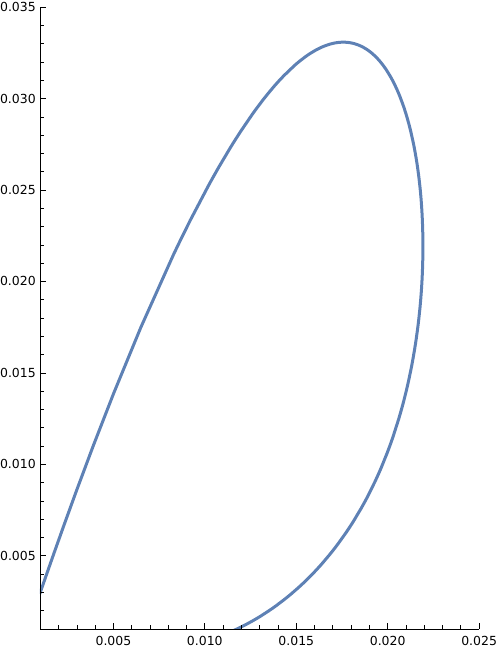}\hspace*{30pt}\includegraphics[height=5cm, width=5.8cm]{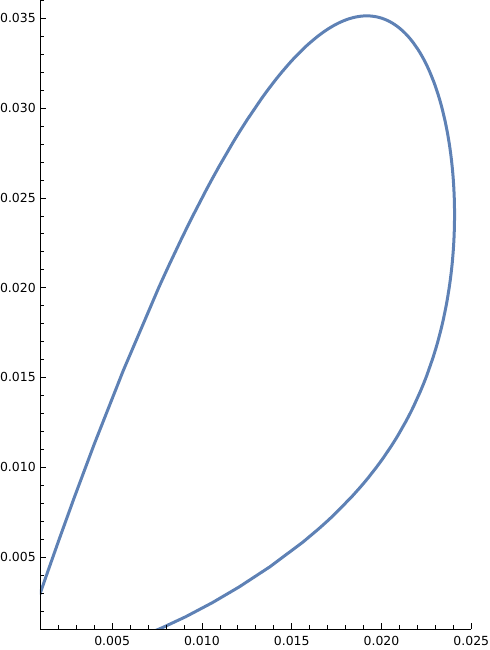}
\end{center}
\caption{\it \small The orbit from $(0,0)$ along unstable manifold for $a=0.2, 0.1$ resp.}
\end{figure}

\section{Relativistic Equation of State for Fermi--Dirac distribution}

We follow the description from Chavanis and Alberti, cf. \cite{f}. W omit the distribution function and averaging over the momenta. Thus we start with the energy density
$$
\rho = \frac{1}{\pi^2} \int_0^\infty \frac{y^{2}\sqrt{1+y^2}}{1+e^{-\alpha}e^{x\sqrt{1+y^2}}}dy
$$
and the pressure
$$
p = \frac{1}{3\pi^2} \int_0^\infty \frac{y^{4}}{\sqrt{1+y^2}\left(1+e^{-\alpha}e^{x\sqrt{1+y^2}}\right)}dy\,.
$$
At $x\sim 0$ we have after substitution $z=x\sqrt{1+y^2}$ that
$$p(x)\sim \frac{1}{3\pi x^4}\int_0^\infty z^3\left( 1+e^{-\alpha}e^z\right)^{-1}dz$$
and 
$$\rho(x) \sim \frac{1}{\pi x^4}\int_0^\infty z^3\left( 1+e^{-\alpha}e^z\right)^{-1}dz$$
whence 
$$p(\rho)\sim \rho/3$$ 
as $\rho\sim \infty$. At $x\sim \infty$ we get 
$$\pi^2\, p(x)\sim \sqrt{2}x^{-3/2}e^{-x+\alpha}\int_0^\infty w^{1/2}e^{-w}dw$$ while 
$$\pi^2\rho(x)\sim2\sqrt{2}x^{-5/2}e^{-x+\alpha}\int_0^\infty w^{1/2}e^{-w}dw.$$ Hence as $\rho\sim 0$ we get $$p(\rho)\sim -\rho/\log(\rho)\,.$$ Note that we keep the $p\le \rho$ behaviour postulated by Christodoulou so that the speed of sound does not exceed the speed of light.
The variable $x$ is the inverse of the local temperature calculated at $r$ while $\alpha>0$ is some constant related to the global temperature called also Tolman temperature measured by the observer from infinity, as well as the global chemical potential called also Klein potential and Boltzmann constant, for details see Chavanis and Alberti. It is worth noting that although the local temperature can be eliminated in the equation of state since $p=p(x)$ and $\rho=\rho(x))$ so inverting it yields the dependence of $p$ on $\rho$ directly. However, since $xp\sim \rho$ for large values of $x$ hence we can recover $x$ and hence the local temperature $1/x$ calculated at $r$ if we just take the ratio of the pressure and the energy density. Moreover, notice that in the equation of state the variable $\alpha$ is present, corresponding to the temperature in isothermal Newtonian system, yet in the limits it disappears both at $0$ and $\infty$.
$$
p(\rho)=\rho/(3-\log(\rho)(1+\rho)^{-1})
$$
\begin{figure}[ht]
\label{Fig:rFD}
\begin{center}
\includegraphics[height=5cm, width=5cm]{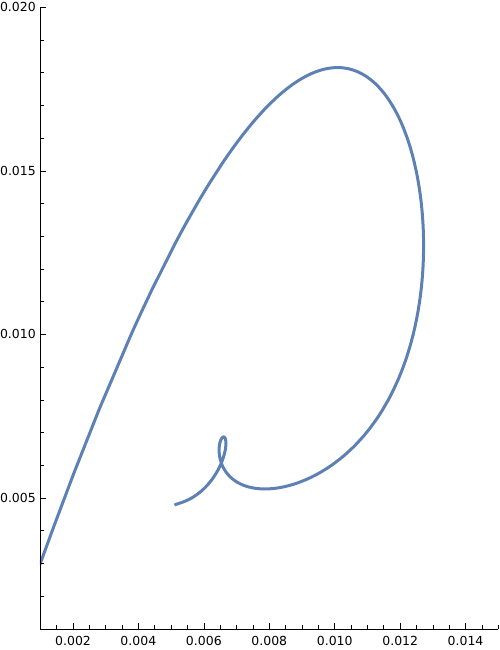}\hspace*{30pt}\includegraphics[height=5.5cm, width=6cm]{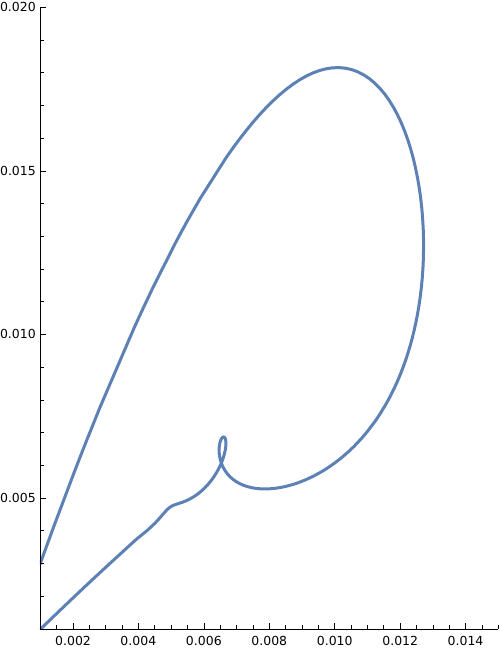}
\end{center}
\caption{\it \small The orbit starting from $(0,0)$ along unstable manifold}
\end{figure}

Above we depicted the behaviour of the branch of solution evolving alongside unstable manifold on two time intervals for the toy model mimicking the asymptotic behaviour of the pressure function in the equation of state.

Note that since the system is nonautonomous one can meet with the intersection of the trajectory on the projection plane which is not the real phase plane.

\section{Relativistic Equation of State for pure dark matter fermion gas}

We follow the description from recent paper of Arguelles, Rueda and Ruffini, cf. \cite{Arg}. W omit the distribution function and averaging over the momenta. Thus we start with the energy density
$$
\rho = \kappa \left( \sqrt{1+x^2}(x+2x^3)-{\rm arcsinh}(x)\right)
$$
and the pressure
$$
p = \frac{\kappa}{3} \left( x\sqrt{1+x^2}(2x^2-3)+3{\rm arcsinh}(x)\right) \,.
$$
Then at $x\sim 0$ we have for $8\pi^2\lambda^3\kappa=mc^2$ that
$$
\rho \sim 2\kappa x^4
$$
while
$$
3p \sim 2\kappa x^4
$$
yielding for $\rho\sim 0$
$$
3p \sim \rho\,.
$$
Moreover, as $x\sim \infty$ then
$$
3p\sim \kappa \chi x^5
$$
where $\chi=\frac{53}{40}$, while
$$
3\rho\sim 8x^3
$$
whence as $\rho\sim\infty$ we have
$$
p\sim p_\infty \rho^{5/3} 
$$
where the constant $p_\infty=1$.
The monotonicity and convexity can be deduced from
$$
\kappa^{-1}\rho'=8x^2\sqrt{x^2+1}
$$
while
$$
3\kappa^{-1}\sqrt{x^2+1}p'=8x^4
$$
and
$$
3\kappa^{-1}\sqrt{(x^2+1)^3}p''=8x^3(3x^2+4)
$$
whence by
$$
\frac{d}{d\rho}\left(p(x(\rho))\right)=\frac{p'(x(\rho)}{\rho'(x(\rho))}=\frac{1}{1+x^{-2}(\rho)}
$$
we have that the above function is positive and increasing which implies that $p$ is increasing and convex with respect to $\rho$. Furthermore
$$
3p/\rho=1-\frac{4}{2x^2+1}\le 1\,.
$$

\section{Density profiles}
Below we present density profiles for the relativistic linear case, relativistic Fermi--Dirac and relativistic Michie--King distribution functions.
\begin{figure}[ht]
\label{Fig:rFD}
\begin{center}
\includegraphics[height=3.5cm, width=3.5cm]{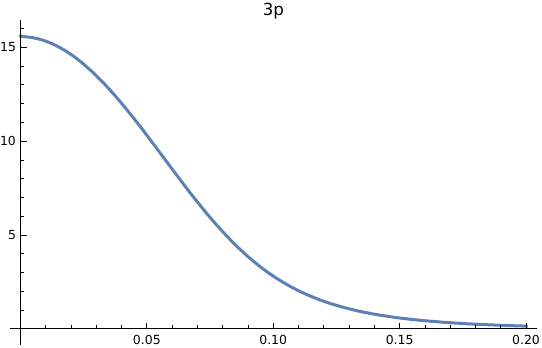}\hspace*{30pt}\includegraphics[height=3.5cm, width=3.5cm]{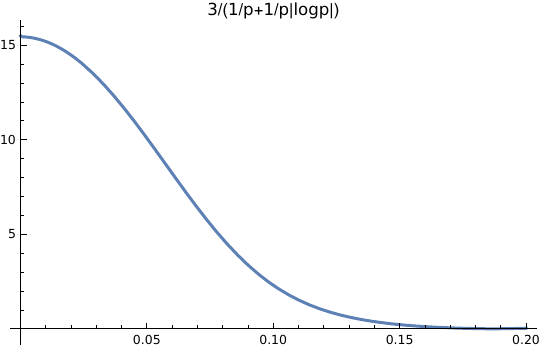}\hspace*{30pt}\includegraphics[height=3.5cm, width=3.5cm]{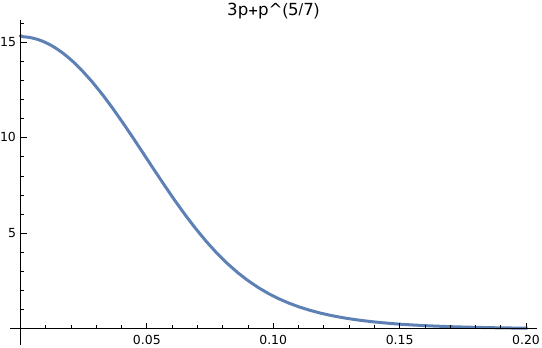}
\end{center}
\caption{\it \small The density for $\rho$ equal: $3p$, $3p/(1+1/\log p)$, $3p+p^{5/7}$, resp. }
\end{figure}
We thus depict the profiles defining the relevant equations of state involved in the solutions of the TOV equation. The rate of the convergence to zero is the only noticeable difference is in which is reasonable since the functions differ only at $0$ not at infinity where all of them should be asymptotically close to $\rho=3p$ as expected.

\section{Acknowledgements}
We dedicate this work to cherished memory of late Tadek Nadzieja, cf. \cite{BHN}, one of the pioneers in Wroc\l aw with Piotr Biler and Andrzej Krzywicki of the dynamical system approach in research on astrophysical models.




\end{document}